\newtheorem{theorem}{Theorem}[section]
\newtheorem{lemma}[theorem]{Lemma}
\newtheorem{corollary}[theorem]{Corollary}
\newcommand{\eps}{\varepsilon}
\newenvironment{proof}[1][Proof]{\textbf{#1.} }{\ \rule{0.5em}{0.5em}}
\begin{document}

\title{A lower bound for online rectangle packing\thanks{Partially supported by a grant from GIF - the
German-Israeli Foundation for Scientific Research and Development
(grant number I-1366-407.6/2016).}}

\author{Leah
Epstein\thanks{ Department of Mathematics, University of Haifa,
Haifa, Israel. \texttt{lea@math.haifa.ac.il}.}}

\date{}

\maketitle

\begin{abstract}
We slightly improve the known lower bound on the asymptotic
competitive ratio for online bin packing of rectangles. We present
a complete proof for the new lower bound, whose value is above
$1.91$.
\end{abstract}

\section{Introduction}
Bin packing \cite{JoDUGG74} is a well-studied combinatorial
optimization problem. The goal is to partition items with rational
sizes in $(0,1]$ into subsets of total sizes at most $1$, called
bins. In the online version, items are presented one by one, such
that every item is assigned irrevocably to a bin before the next
item arrives. This classic variant is also called one-dimensional
bin packing.

Rectangle packing is a generalization of bin packing where every
item is an axis parallel oriented rectangle. Each rectangle $r_i$
has a height $0 < h(r_i) \leq 1$ and a width $0 < w(r_i) \leq 1$.
The objective is to partition input rectangles into subsets, such
that every subset can be packed into a bin, where a bin is a unit
square. Packing should be done such that rectangles will not
intersect, but their boundaries can touch each other and they can
also touch the boundary of the bin. As rectangles are oriented,
they cannot be rotated. In the online variant, rectangles are
presented one by one, as in the one-dimensional version. There are
two scenarios; the one where the specific packing of a rectangle
is decided upon arrival (the position inside the bin), and the
less strict one, where the algorithm keeps subsets of rectangles
that can be packed into bins, but the exact packing can decided at
termination. Typically, positive results are proved for the first
version while negative results are proved for the second one, and
thus, all results are valid for both versions.

For an algorithm $A$ for some bin packing problem, and an input
$I$, the number of bins used by $A$ is denoted by $A(I)$. In
particular, for an optimal offline algorithm $OPT$ that receives
$I$ as a set, its cost is denoted by $OPT(I)$, and this is the
minimum number of bins required for packing $I$. The approximation
ratio, or competitive ratio if $A$ is online, for input $I$ is
$\frac{A(I)}{OPT(I)}$. The absolute approximation ratio or
absolute competitive ratio is $\sup_{I} \{\frac{A(I)}{OPT(I)}\}$,
and the asymptotic approximation ratio or asymptotic competitive
ratio $R(A)$ (which is never larger than the absolute one) is
$$R(A)=\lim\limits_{N \rightarrow \infty} \sup_{I}
\left\{\frac{A(I)}{OPT(I)} \  \bigg{|} OPT(I) \geq
N\right\}=\limsup\limits_{N \rightarrow \infty} \max_{I}
\left\{\frac{A(I)}{OPT(I)} \  \bigg{|} OPT(I)=N\right\}  .
$$

For one-dimensional online bin packing, it is known that the
asymptotic competitive ratio is in $[1.5427809,1.57828956]$
\cite{BBDEL_ESA18,BBDEL_newlb} (see also
\cite{balogh2012new,HvS16,Seiden02J,Vliet92,LeeLee85,Yao80A}).

For rectangle packing, there is a number of articles where various
algorithms are designed
\cite{coppersmith1989multidimensional,csirik1993two,Csivan93,SeidenS03,epstein2005optimal,han2011new}.
Where the last work is the one of Han et al. \cite{han2011new},
and the current best asymptotic competitive ratio is still above
$2.5$. The history of lower bounds is as follows. Galambos showed
a lower bound of $1.6$ on the asymptotic competitive ratio of any
algorithm \cite{G91}. This was improved by Galambos and van Vliet
to approximately $1.808$ by applying the same idea multiple times
\cite{GV94}.  By increasing the number of types of items in every
part of the construction, an improved lower bound of approximately
$1.851$ was shown by van Vliet \cite{vanv}. Finally, by applying
an additional modification, a lower bound of $1.907$ was claimed
\cite{Blitz,BlitzG}. For many years the lower bound of $1.907$ was
cited as an unpublished manuscript \cite{BlitzG}. This result
appears in the thesis of Blitz \cite{Blitz} that was not
accessible for many years. That thesis \cite{Blitz} contains
information that can assist in obtaining a proof, and can be seen
as guidelines for obtaining it. A manuscript was published on
arxiv with the details of an inferior result of approximately
$1.859$ \cite{HeyS17} also appearing in \cite{Blitz} with a
partial proof, where there are just nine types of items, while the
value $1.907$ was treated by many researchers as a conjecture.

The special case of rectangle packing, where all input items are
squares was studied as well
\cite{coppersmith1989multidimensional,vanv,Blitz,SeidenS03,epstein2005online,han2010note,balogh2017lower}.
For this version there is also a large gap between the lower bound
and upper bound on the asymptotic competitive ratio, where the
lower bound is approximately $1.75$ \cite{balogh2017lower}, while
the upper bound is above $2.1$ \cite{han2010note}. Another
generalization of square packing is non-oriented packing of
rectangles, where rectangles are still packed in an axis parallel
manner, but they can be rotated by $90$ degrees
\cite{fujita2002two,epstein2010two}. This version is very
different from the oriented one. For example, in the non-oriented
version, given rectangles of heights $0.66$ and widths of $0.34$,
any bin can contain at most two such items, while the non-oriented
version allows us to pack four such items into each bin.

As mentioned above, the previous lower bound on the asymptotic
competitive ratio is known as $1.907$ \cite{Blitz,BlitzG}, which
was cited multiple times, but there is no full proof of this
result. While the thesis of Blitz \cite{Blitz} has a number useful
guidelines for the proof, including the input and properties of a
certain linear program (LP) and its dual (see below), it does not
contain a complete and precise proof, and only the proof of the
lower bound $1.859$ was recovered completely \cite{HeyS17}. This
last construction is based on the nine types of items appearing in
the bottom three rows in Figure \ref{figgy}. In our work, we use
the guidelines of Blitz that were provided for an intermediate
result with $12$ item types, which was a lower bound of
approximately $1.905$ \cite{Blitz}. We modify the input by
replacing the first item with a potentially infinite sequence of
items, so instead of $12$ item types used, we have the last $11$
types, and we use a large number of types instead of the first
type. This approach allows us to provide a complete proof and show
a slightly higher lower bound of $1.9100449$ on the asymptotic
competitive ratio of any online algorithm for the packing problem
of rectangles into unit square bins. The construction of Blitz
giving a lower bound of $1.907$ consists of another row of items
on the top compared to Figure \ref{figgy}, that is, the small
empty space in the top of the bin in Figure \ref{figgy} also
contains three items. However, replacing the first item of this
construction leads to a result inferior to the one which we prove,
since the first item type  out of the $15$ has very small height,
and replacing it with a sequence of items only increases the bound
by a very small amount. On the other hand, replacing the first
item of the construction with nine item types does not increase
the lower bound above $1.9$.

We briefly discuss the relation between the proof methods. Here,
we do not provide the details of proving the results of Blitz
\cite{Blitz} using our method, since we show a better result.
However, using our proof methods it is possible to recover all
three lower bounds mentioned the thesis of Blitz \cite{Blitz} for
rectangle packing, and many of the required properties are proved
here. For comparison between the two methods (which are related),
we describe the approach of \cite{vanv,Blitz} for proving lower
bounds  on the competitive ratio for inputs of a specific form.
Such an input has several types of items fixed in advance, where
the input is of the form that at each time a large number of
identical items arrive (those are items of some type), and then
the input may be stopped (if the number of bins already used by
the algorithm is relatively high) or it may continue (if not all
item types were presented yet). Note that not all lower bounds for
the asymptotic competitive ratio of bin packing problems have this
structure, and inputs may have branching or clusters of items of
close but slightly different sizes
\cite{BBDGT,BBDEL_ESA,balogh2017lower}, though many results do
have the form we discuss here
\cite{Vliet92,FK13,BDE,balogh2012new}.

For the kind of inputs we described here, which will be used in
our construction, it is possible to analyze packing patterns. A
pattern is a multiset of items that can be packed into a bin. One
can generate all such patterns for a given input or they can be
analyzed without generating them. If the number of types is
constant as we assume here, it is possible to write an LP whose
variables are the numbers of patterns of every kind. The LP states
the relation between numbers of items and numbers of bins with all
possible patterns, i.e., numbers of items are counted as a
function of the number of patterns containing such items
(multiplied by the numbers of items in different patterns), and it
is ensured that all items are indeed packed. Patterns are
partitioned into subsets where every subset consists of patterns
whose bins are first used after the arrival of one type of items.
This is done since bins only count towards the cost of the
algorithm starting the arrival time of the first items packed into
them. Obviously, there are also constraints stating that the
competitive ratio is not violated. The inputs are sufficiently
large such that the absolute competitive ratio and the asymptotic
one are equal. The cost of the algorithm is also based on numbers
of suitable patterns, while the optimal cost is computed based on
the input. There is work where this LP is solved
\cite{yang2003ordered}, and work where the dual LP is analyzed too
\cite{vanv,Blitz,HeyS17}. In the primal LP, there are two
constraints for every item type. Thus, the dual LP has two
variables for every item type. It is frequently the case that in
an optimal solution to the dual LP the two sets of variables
differ by just a multiplicative factor.

Here, we do not use a linear program, though a linear program
corresponding to our approach would not contain a variable for
every pattern, but just a variable for the number of bins opened
by an online algorithm after the arrival of a type of items. Thus,
we would have one variable for all patterns of one subset in the
partition of patterns. We use weights for items, and these weights
are strongly related to the values of variables in solutions to
dual LP's. In our method, it is required to find an upper bound on
the total weight of any pattern in every subset of the partition,
and this is also required in the method with LS's and their dual
LP's described above. Thus, we can use the tables of data
\cite{Blitz}, both for total weights and for optimal solutions. As
these tables were given without proof, we fill this gap and
provide proofs. For our method it is not required to know the
precise values (for maximum weights and costs of optimal
solutions) but only upper bounds on these values. The method we
apply was used in the past \cite{BDE,balogh2017lower}. We use it
here as in \cite{BDE} even though the packing problem here is
different, since the method defined there can be used for many bin
packing variants.



\section{Lower bound}
Our input is based on a modification of one of the inputs of Blitz
\cite{Blitz}, where the first item is replaced with a sequence of
items. Note that this is not the input for which a lower bound of
approximately $1.907$ was claimed, but an inferior input for which
the claimed lower bound was approximately $1.905$.


Let $N>0$ be a large integer divisible by $5^k \cdot 7224$. Let $k
\geq 4$ be an integer, which is seen as a constant that is
independent of $N$. Let $\delta>0$ and $\eps>0$ be very small
values, such that $\delta < \frac 1{2^{3k+50}}$ (where in
particular, $2^{50}\delta < \frac 1{20}$) and $\eps < 0.0001$. Let
$$h_1=\frac 1{43}+\eps,  \ \ \ \ h_2= \frac 17+\eps, \ \ \ \
h_3=\frac 13+\eps, \mbox{\ \ \ \ and \ \ \ \ } h_4=\frac 12+\eps .
$$ Note that $h_1+h_2+h_3+h_4=\frac{1805}{1806}+4\eps <1$.

The input consists of $k+9$ item types. The first $k$ item types
have heights of $h_1$, the next three item types have heights of
$h_2$, the following three item types have heights of $h_3$, and
the final three item types have heights of $h_4$. The input may
stop after each one of the $k+9$ item types, and in case that some
item type is presented, there are $N$ identical items of this
type. For $i=1,\ldots,k$, the $i$th item type out of the first $k$
types is denoted by type $\ell_{1i}$. For $i=1,2,\ldots,k-2$, it
is defined by its width $$w_{1i}=\frac{1+\delta}{5^{k-i-1}},$$ and
therefore our addition to the original input \cite{Blitz} is
replacing an item whose width is just below $\frac 14$ by items
slightly wider than negative powers of $5$. We also let the widths
of type $\ell_{1(k-1)}$ be $w_{1(k-1)}=\frac {1+2^{40}\delta}4$,
and the width of type $\ell_{1k}$ is defined as $w_{1k}=\frac
{1+2^{40}\delta}2$. Dimensions for all item types are also given
in Table \ref{tabtab}, and an illustration is given in figure
\ref{figgy}. We let $h_{1i}=h_1$ for $1\leq i \leq k$, and
$h_{ji}=h_j$ for $j=2,3,4$ and $i=0,1,2$.


Thus, for $1 \leq i \leq k-3$, we have $w_{1(i+1)}=5\cdot w_{1i}$,
and we also have $w_{1k}=2\cdot w_{1(k-1)}$. For any integer $1
\leq t \leq k-2$, the total width of $t$ items, consisting of
exactly one item of every type $\ell_{1i}$ for any $1 \leq i \leq
t$, is
$$\sum_{i=1}^{t} w_{1i} = (1+\delta)\sum_{i=1}^t \frac
{1}{5^{k-i-1}}= (1+\delta)\frac{1}{5^{k-t-1}}\sum_{i=1}^t
\frac{1}{5^{t-i}}= (1+\delta)\frac{1}{5^{k-t-1}}\sum_{j=0}^{t-1}
\frac{1}{5^{j}}$$
$$=(1+\delta)\frac{1}{5^{k-t-1}}\frac{1-\frac1{5^t}}{1-\frac
15}=(1+\delta)\frac{1}{4 \cdot 5^{k-t-2}}(1-\frac
1{5^t})=(1+\delta)(\frac{1}{4 \cdot 5^{k-t-2}}-\frac{1}{4 \cdot
5^{k-2}})<\frac {1-2^{42}\delta}{4 \cdot 5^{k-t-2}}  , $$ by
$2^{42}+1<2^{43}$ and since
$$\frac{2^{43}\delta}{4 \cdot 5^{k-t-2}} < \frac{1+\delta}{4 \cdot
5^{k-2}}$$ holds by $t \leq k-2$ and $$\frac{4 \cdot 5^{k-2}}{4
\cdot 5^{k-t-2}} = 5^{t} \leq 5^{k-2} \mbox{ \ \ \ \ while \ \ \ \
} \frac{1+\delta}{2^{43}\delta} > \frac{1}{2^{43}\delta}> 2^{3k+7}
> 8^k . $$ In particular for $t=k-2$, the total width is below
$\frac 14$. Thus, we also have $$\sum_{i=1}^{k-1} w_{1i} < \frac
{1-2^{42}\delta}{4} + \frac {1+2^{40}\delta}4 < \frac 12 \mbox{ \
\ \ \ and  \ \ \ \ } \sum_{i=1}^{k} w_{1i} < \frac
{1-2^{42}\delta}{4} + 3 \cdot \frac {1+2^{40}\delta}4 < 1  . $$

The next three types are denoted by $\ell_{20}$, $\ell_{21}$,
$\ell_{22}$, and their widths are $w_{20}=\frac 14-2^{32}\delta
> \frac 15$, $w_{21}=\frac 14+2^{30}\delta$, and $w_{22}=\frac
12+2^{31}\delta$, respectively. The following three types are
denoted by $\ell_{30}$, $\ell_{31}$, $\ell_{32}$, and their widths
are $w_{30}=\frac 14-2^{22}\delta > \frac 15$, $w_{31}=\frac
14+2^{20}\delta$, and $w_{22}=\frac 12+2^{21}\delta$,
respectively. The last three types are denoted by $\ell_{40}$,
$\ell_{41}$, $\ell_{42}$, and their widths are $w_{40}=\frac
14-2^{12}\delta > \frac 15$, $w_{41}=\frac 14+2^{10}\delta$, and
$w_{42}=\frac 12+2^{11}\delta$, respectively. Note that
$w_{j0}+w_{j1}+w_{j2} <1$ and $w_{j0}+w_{j1} < \frac 12$ for
$j=2,3,4$, but

$$ w_{20} + 3 \cdot w_{1(k-1)} \geq 2 \cdot
 w_{20}
 + 2 \cdot w_{1(k-1)} \geq 3 \cdot w_{20} + w_{1(k-1)}
=3\cdot (\frac 14-2^{32}\delta)+\frac {1+2^{40}\delta}4 > 1 -
2^{34}\delta+2^{38}\delta
>1$$ and $$2 \cdot w_{20} +w_{1k} = 2 \cdot w_{20} +2 \cdot w_{1(k-1)}
>1  . $$

In addition, we have

$$ w_{(j+1)0}+3 w_{j1} \geq  2 w_{(j+1)0}+ 2 w_{j1} = 3 w_{(j+1)0}+ w_{j1} =  3(\frac 14-2^{52-10j}\delta)+(\frac 14+2^{60-10j}\delta)  >1$$ and
$$ 2 w_{(j+1)0}+ w_{j2} = 2 w_{(j+1)0}+2  w_{j1} >1 \mbox{\ \ \ \ \ for \ \ \ \ \ }
j=2,3.$$

It can be seen that $w_{20} < w_{30} < w_{40}$ while $w_{1(k-1)}
> w_{21} > w_{31} > w_{41}$ and $w_{1k)}
> w_{22} > w_{32} > w_{42}$.

\begin{figure} [h!]
\vspace{0.3cm} \hspace{0.5in}
\includegraphics[angle=0,width=\textwidth]{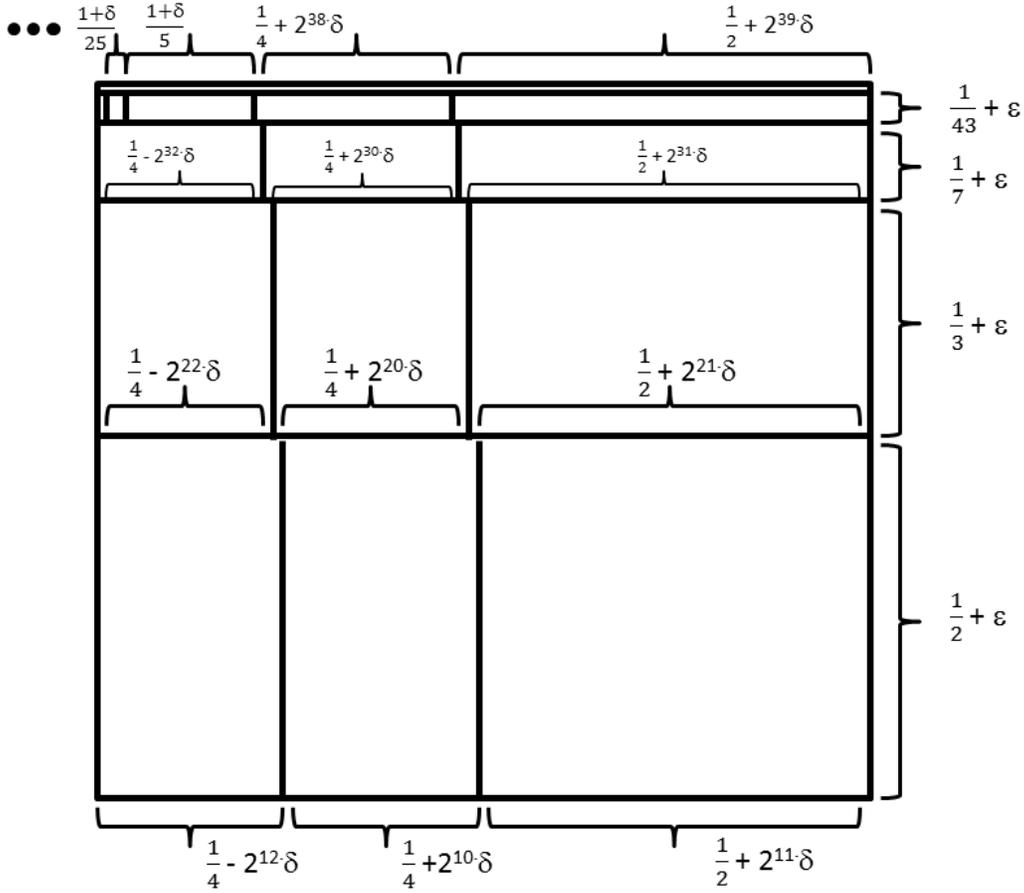}
\vspace{-0.07cm} \caption{An illustration of the input in terms of
one bin of an optimal solution for the entire input, if all item
types are presented. Item types arrive ordered from top to bottom
and from left to right. \label{figgy}}
\end{figure}

We say that an item type is later than another type if it is
presented later in the input. The weight $v_{ji}$ for an item of
type $\ell_{ji}$ is given in Table \ref{tabtab}. The weights were
selected based on dual variables provided in \cite{Blitz}.

We let $V_{ji}$ denote the maximum total weight of a bin
containing items of the types consisting of type $\ell_{ji}$ and
later types. Let $\Omega_{ji}$ be an upper bound on
$\frac{OPT_{ji}}N$, where $OPT_{ji}$ is the cost of an optimal
solution for the input up to type $\ell+{ji}$ items.  Note that
\cite{Blitz} contains tables with costs of optimal solutions and
$V_{ji}$ values for the part of the input that is identical to
ours, though it does not contain proofs of all the claimed values.

We use a theorem defined for inputs for bin packing problems, such
that the inputs have the form explained in the introduction. The
input consists of ``batches'' of identical items without
branching. Substituting our notation, the theorem states that

$$\frac{\sum_{i=1}^k v_{1i}+\sum_{j=2}^4 \sum_{i=0}^2
v_{ji}}{Q}  , $$ where $$Q=\left(\Omega_{11}\cdot
V_{11}+\sum_{i=2}^k
(\Omega_{1i}-\Omega_{1(i-1)})V_{1i}\right)+\left((\Omega_{20}-\Omega_{1k})V_{20}+
(\Omega_{21}-\Omega_{20})V_{21}+(\Omega_{22}-\Omega_{21})V_{22}\right)$$
$$+ \left(\sum_{j=3}^4 \left((\Omega_{j0}-\Omega_{(j-1)2})V_{j0} +
(\Omega_{j1}-\Omega_{j0})V_{j1}+(\Omega_{j2}-\Omega_{j1})V_{j2}\right)\right)
 ,
$$

is a lower bound on the asymptotic competitive ratio (see
\cite{BDE}).

Note that one can use an upper bound on $V_{ji}$ rather than the
actual value if all multipliers are positive, which will be the
case here. This will hold as we will ensure that the sequence of
upper bounds on $\frac{OPT_{ji}}N$ will be monotonically
non-decreasing. In fact, many of the values that we use for
$V_{ij}$ and $\Omega_{ji}$ are not just upper bounds, but they are
the precise values, though we do not prove this property and do
not use it. In order to apply the formula, one has to show that
all optimal solutions are of order of growth $\Theta(N)$ (since we
are interested in a lower bound on the asymptotic competitive
ratio), which will be shown later.

We have  $$\sum_{i=1}^k v_{1i}+\sum_{j=2}^4 \sum_{i=0}^2
v_{ji}=68.25-\frac{1}{4\cdot 5^{k-1}}  , $$ since $\sum_{i=k-1}^k
v_{1i}+\sum_{j=2}^4 \sum_{i=0}^2 v_{ji}=67$ and
$$\sum_{i=1}^{k-2} v_{1i}= \sum_{i=1}^{k-2}\frac{1}{5^{k-i-2}} =\sum_{j=0}^{k-3}
\frac{1}{5^{j}}=\frac{1-\frac{1}{5^{k-2}}}{4/5}=1.25-\frac{1}{4\cdot
5^{k-1}}  , $$  and $$Q \leq \frac{1}{168}
\cdot(42\cdot(5-\frac{1}{5^{k-3}})/5^{k-3}+\sum_{i=2}^{k-2}42\cdot
(5-\frac{1}{5^{k-i-2}})\cdot(\frac{1}{5^{k-i-2}}-\frac{1}{5^{k-(i-1)-2}})$$
$$+1\cdot 126+2\cdot 112+6\cdot 96+6\cdot 72+12\cdot 68+14\cdot
48+14\cdot 42+28\cdot 36+21\cdot 24+21\cdot 18+42\cdot 12)  . $$
Since
$$\sum_{i=2}^{k-2}(5-\frac{1}{5^{k-i-2}})\cdot(\frac{1}{5^{k-i-2}}-\frac{1}{5^{k-(i-1)-2}})=\sum_{i=2}^{k-2}(5-\frac{1}{5^{k-i-2}})\cdot\frac{4}{5^{k-i-1}}$$
$$=4\cdot\sum_{i=2}^{k-2}\frac{1}{5^{k-i-2}}-0.8 \cdot
\sum_{i=2}^{k-2}\frac{1}{25^{k-i-2}}=4\cdot
\sum_{u=0}^{k-4}\frac{1}{5^{u}}-0.8\cdot
\sum_{u=0}^{k-4}\frac{1}{25^{u}}$$
$$=4\cdot(\frac{1-(1/5)^{k-3}}{0.8})-0.8\cdot(\frac{1-(1/25)^{k-3}}{0.96})=\frac{25}{6}-\frac{1}{5^{k-4}}+
\frac{1}{6 \cdot 5^{2k-7}} \
$$

we have, $$Q \leq \frac{1}{168} \cdot
(42/5^{k-4}-42/5^{2k-6}+42(\frac{25}{6}-\frac{1}{5^{k-4}}+
\frac{1}{6 \cdot
5^{2k-7}})+5828)=\frac{6003-\frac{7}{5^{2k-6}}}{168}  .
$$

We get $$r \geq \frac{68.25-\frac{1}{4\cdot 5^{k-1}}}{Q} \geq
\frac{68.25-\frac{1}{4\cdot
5^{k-1}}}{\frac{6003}{168}-\frac{1}{24\cdot 5^{2k-6}}}  . $$
Letting $k$ grow to infinity, we get
$\frac{11466}{6003}=\frac{1274}{667}\approx 1.9100449$.

{\begin{table}[h!]
\renewcommand{\arraystretch}{1.4   }

$$
\begin{array}{||c|c|c|c|c|c||}
\hline \hline

\mbox{Item type} & \mbox{width}  & \mbox{height}  & \mbox{weight} & \mbox{upper bound}  &  \mbox{upper bound on}    \\

\mbox{$\ell_{ji}$} & w_{ji} &  h_{ji} & v_{ji} & \mbox{on\ }  V_{ji} &   168 \cdot OPT_{ji}/N \ \ \  (\Omega_{ji}) \\

\hline j=1, \ 1 \leq i \leq k-2
& \frac{1+\delta}{5^{k-i-1}} & \frac 1{43}+\eps & \frac{1}{5^{k-i-2}} & 42(5-\frac{1}{5^{k-i-2}})& \frac{1}{5^{k-i-2}}  \\

\hline j=1, \  i=k-1 & \frac
{1+2^{40}\delta}4 & \frac 1{43}+\eps & 1 & 126 & 2  \\

\hline j=1, \ i=k & \frac
{1+2^{40}\delta}2 & \frac 1{43}+\eps & 2 & 112 & 4   \\

\hline j=2, \ i=0 &   \frac 14-2^{32}\delta& \frac 1{7}+\eps & 4 & 96 & 10  \\

\hline j=2, \ i=1 &
\frac 14+2^{30}\delta & \frac 1{7}+\eps & 4 & 72 & 16  \\

\hline j=2, \ i=2
& \frac 12+2^{31}\delta & \frac 1{7}+\eps & 8 & 68 & 28  \\

\hline j=3, \ i=0
& \frac 14-2^{22}\delta & \frac 1{3}+\eps & 6 & 48 & 42  \\

\hline j=3, \ i=1
& \frac 14+2^{20}\delta & \frac 1{3}+\eps & 6 & 42 & 56  \\

\hline j=3, \ i=2
&  \frac{1}{2}+2^{21}\delta  & \frac 1{3}+\eps & 12 & 36 & 84  \\

\hline j=4, \ i=0
& \frac 14-2^{12}\delta & \frac 1{2}+\eps & 6&  24 & 105  \\

\hline j=4, \ i=1
& \frac 14+2^{10}\delta & \frac 1{2}+\eps & 6& 18& 126   \\

\hline j=4, \ i=2
&  \frac{1}{2}+2^{11}\delta & \frac 1{2}+\eps & 12& 12& 168  \\

\hline \hline

\end{array}
$$
\caption{\label{tabtab} A summary of the input and all values
required for the proof of the lower bound. The first four columns
contain definitions, and the contents of the remaining two columns
are proved in the text.}
\end{table}
}

\begin{lemma}
For every valid pair $j,i$, we have $OPT_{ji} \leq \Omega_{ji}$,
where $\Omega_{ji}$ is stated in Table \ref{tabtab}.
\end{lemma}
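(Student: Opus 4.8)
The plan is to prove the bound constructively: for every stopping point $(j,i)$ I will exhibit one explicit packing of the corresponding prefix input and count its bins, which immediately upper bounds $OPT_{ji}$. I work with horizontal \emph{shelves} (maximal strips holding equal-height items side by side, stacked vertically). Two geometric ingredients suffice. First, the number of shelves of each height class fitting in a unit bin is $\lfloor 1/h_j\rfloor$, namely $42,6,2,1$ for $h_1,h_2,h_3,h_4$, valid because $42h_1,6h_2,2h_3,h_4<1$ while the next multiple of each exceeds $1$. Second, the number of present items fitting in one such shelf follows from the width inequalities already recorded: $4w_{j0}<1$, $2(w_{j0}+w_{j1})<1$ and $w_{j0}+w_{j1}+w_{j2}<1$ for $j=2,3,4$, and $\sum_{t=1}^{i}w_{1t}$ is below $\frac{1}{4\cdot5^{k-i-2}}$, $\frac12$ and $1$ for $i\le k-2$, $i=k-1$ and $i=k$ respectively.

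For the first family $j=1$ I pack \emph{bundles}, where a bundle is one item of each present type $\ell_{11},\dots,\ell_{1i}$. By the width bound above I may place $M$ bundles side by side in one shelf, with $M=4\cdot5^{k-i-2}$ for $i\le k-2$, $M=2$ for $i=k-1$, and $M=1$ for $i=k$. Stacking $42$ shelves fills a bin with $42M$ copies of every present type, so the prefix uses $N/(42M)$ bins, i.e. $\frac{N}{168\cdot5^{k-i-2}}$, $\frac{N}{84}$ and $\frac{N}{42}$, matching the table entries $\frac{1}{5^{k-i-2}},2,4$ (each divided by $168$).

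For a later class $j\in\{2,3,4\}$ I use a single kind of \emph{combined} bin: $d_j$ shelves of class $j$ at the bottom and $d_j$ shelves of each lower class above, where $(d_2,d_3,d_4)=(6,2,1)$. The total height is below $1$ in every case, the extremal one being $h_1+h_2+h_3+h_4=\frac{1805}{1806}+4\eps<1$ for $j=4$. On each class-$j$ shelf I place the present class-$j$ types (four $\ell_{j0}$ when $i=0$; two each of $\ell_{j0},\ell_{j1}$ when $i=1$; one each of $\ell_{j0},\ell_{j1},\ell_{j2}$ when $i=2$), while every lower shelf carries one complete set of the lower types (and one $h_1$-bundle). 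Taking exactly enough combined bins to exhaust the class-$j$ items leaves a deficit of lower items equal to $\frac34,\frac12,0$ of the full demand for $i=0,1,2$, and this deficit is precisely a scaled copy of the input at the previous completed stage, whose packing is a single repeated bin and therefore scales exactly. Writing $\rho_{ji}=OPT_{ji}/N$ and letting $\rho^{-}_j$ be the rate of that completed lower input ($\rho^{-}_2=\rho_{1k}=\frac1{42}$ and $\rho^{-}_j=\rho_{(j-1)2}=\frac1{d_{j-1}}$ for $j=3,4$), I get $\rho_{j2}=\frac{1}{d_j}$, $\rho_{j1}=\frac{1}{2d_j}+\frac12\rho^{-}_j$ and $\rho_{j0}=\frac{1}{4d_j}+\frac34\rho^{-}_j$; substituting reproduces every remaining table value.

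Finally I record that each bin count is an integer, since $N$ is divisible by $5^k\cdot7224=2^3\cdot3\cdot7\cdot43\cdot5^k$, and every denominator that appears ($42,84,168,24,56,12,6,8,4,2$ and $168\cdot5^{k-i-2}$) divides this. I expect the genuine difficulty to lie entirely in \emph{finding} the combined-bin design rather than in verifying it: the nontrivial saving comes from filling the vertical room left above the tall new shelves with complete lower-class sets, together with the observation that the uncovered remainder is exactly a uniform, hence freely scalable, copy of the previous stage. Once this design is in place, the height checks, width checks, and the recursion are routine arithmetic.
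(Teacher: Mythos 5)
Your proof is correct and is essentially the paper's own argument: the paper likewise builds explicit row-based packings in which a bin is split into horizontal rows of the relevant heights, the class-$j$ rows hold four $\ell_{j0}$ items, two $(\ell_{j0},\ell_{j1})$ pairs, or one full triple, each lower row holds one complete set of the earlier types (with the $h_1$ rows holding one bundle of all present $\ell_{1i}$), and the uncovered $\frac{3}{4}$ or $\frac{1}{2}$ fraction of earlier items is packed into bins identical to those of the previous completed stage (e.g., $\frac{3N}{4\cdot 42}$ extra bins for $OPT_{20}$, $\frac{3N}{24}$ for $OPT_{30}$, $\frac{3N}{8}$ for $OPT_{40}$). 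Your recursive bookkeeping $\rho_{j0}=\frac{1}{4d_j}+\frac{3}{4}\rho_j^{-}$, $\rho_{j1}=\frac{1}{2d_j}+\frac{1}{2}\rho_j^{-}$, $\rho_{j2}=\frac{1}{d_j}$ with $(d_2,d_3,d_4)=(6,2,1)$ simply packages the paper's case-by-case counts, and it reproduces every table entry, so the two proofs coincide in substance.
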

\begin{proof}
Let $j=1$. Let $1 \leq i \leq k$, and consider a subset of items
consisting of one item of every type $\ell_{1a}$ for $1 \leq a
\leq i$. For $1 \leq i \leq k-2$, the set has total width below
$\frac{1}{4 \cdot 5^{k-i-2}}$, and therefore one can pack them
into a rectangle of height $\frac{1}{43}+\eps$ and width
$\frac{1}{4 \cdot 5^{k-i-2}}$. A bin can be split into $42$ rows
of height $\frac 1{42}$ and ${4 \cdot 5^{k-i-2}}$ columns of width
$\frac{1}{4 \cdot 5^{k-i-2}}$, resulting in $42\cdot {4 \cdot
5^{k-i-2}}$ such rectangles. Thus, $$OPT_{ji} \leq \frac{N}{42
\cdot {4 \cdot 5^{k-i-2}}}=\frac{N}{168} \cdot \frac 1{5^{k-i-2}}
 , $$ and we let $\Omega_{1i}=\frac{1}{168}\cdot\frac
1{5^{k-i-2}}$. For $i=k-1$, the total width is below $\frac 12$,
so the columns will be of width $\frac 12$, and
$\Omega_{1k}=\frac{1}{84}$. For $i=k$, the total width is below
$1$, so the columns will be of width $1$ (that is, there are no
columns), and $\Omega_{1k}=\frac{1}{42}$.

For $j=2,3,4$, an item of type $\ell_{j0}$ can be packed into a
rectangle of width $\frac 14$ and the corresponding height ($\frac
17$ for $j=2$, $\frac 13$ for $j=3$, and $\frac 12$ for $j=1$),
two items, one of type $\ell_{j0}$ and one of type $\ell_{j1}$
have total width below $\frac 12$, and they can be packed into a
rectangle of width $\frac 12$ and the corresponding height. For
three items, one of each type out of $\ell_{j0}$, $\ell_{j1}$, and
$\ell_{j2}$, the total width is below $1$, they can be packed into
a rectangle of width $1$ and the corresponding height.

\medskip

Bounding $OPT_{20}$ is done as follows. Create $\frac{N}{24}$ bins
with six rows of height $\frac{1}7+\eps$ and six rows of height
$\frac{1}{43}+\eps$. This is possible since $\eps = 0.0001$. Every
row of height $\frac 17+\eps$ is split into four columns of width
$\frac 14$. This allows us to pack all items of type $\ell_{20}$,
as there are $24$ areas in every bin that can contain an item of
type $\ell_{20}$ each. Additionally, we can pack one item of each
type $\ell_{1i}$ (where $1 \leq i \leq k$) into every row of
height $\frac{1}{43}+\eps$, allowing to pack six items of every
such type into every bin, and leaving $\frac{3N}{4}$ items of each
such type unpacked. These items are packed into $\frac{3N}{4\cdot
42}$ additional bins, each having $42$ rows of height
$\frac{1}{43}+\eps$. Thus, we let $\Omega_{20} = \frac{10}{168}$.

\medskip

Bounding $OPT_{21}$ is done similarly, but pairs of a type
$\ell_{20}$ item and a type $\ell_{21}$ item are packed into areas
of width $\frac 12$, so they occupy $\frac{N}{12}$ bins and
$\frac{N}2$ items of each type $\ell_{1i}$ remain unpacked and
they require $\frac{N}{2\cdot 42}$ bins, leading to the definition
$\Omega_{21}=\frac{16}{168}$. For $OPT_{22}$, triples of a type
$\ell_{20}$ item, a type $\ell_{21}$ item, and a type $\ell_{22}$
item are packed into areas of width $1$, so they occupy
$\frac{N}{6}$ bins, and all items of types $\ell_{1i}$ are packed
into the rows of height $\frac 1{43}+\eps$ in the same bins. This
leads to the definition $\Omega_{22}=\frac{28}{168}$.

\medskip

Bounding $OPT_{30}$ is done as follows. Create $\frac{N}{8}$ bins
with two rows of height $\frac{1}3+\eps$, two rows of height
$\frac{1}{7}+\eps$ and two rows of height $\frac{1}{43}+\eps$.
This is possible since $\eps = 0.0001$. Every row of height $\frac
13+\eps$ is split into four columns of width $\frac 14$. This
allows us to pack all items of type $\ell_{30}$, as every bin has
eight areas where such an item can be packed. Additionally, we can
pack one item of each type $\ell_{1i}$ into every row of height
$\frac{1}{43}+\eps$, and we can pack one item of each type
$\ell_{2i}$ into every row of height $\frac{1}{7}+\eps$, leaving
$\frac{3N}{4}$ items of each such type unpacked. These items are
packed into $\frac{3N}{4\cdot 6}$ additional bins, each having six
rows of height $\frac 17+\eps$ and six rows of height
$\frac{1}{43}+\eps$. Thus, we let $\Omega_{30} = \frac{42}{168}$.

\medskip

Bounding $OPT_{31}$ is done similarly, but pairs of a type
$\ell_{30}$ item and a type $\ell_{31}$ item are packed into areas
of width $\frac 12$, so they occupy $\frac{N}{4}$ bins and
$\frac{N}2$ items of each type $\ell_{ji}$ for $j=1,2$  remain
unpacked and they require $\frac{N}{12}$ bins, leading to the
definition $\Omega_{31}=\frac{56}{168}$. For $OPT_{32}$, triples
of a type $\ell_{30}$ item, a type $\ell_{31}$ item, and a type
$\ell_{32}$ item are packed into areas of width $1$, so they
occupy $\frac{N}{2}$ bins, and all items of types $\ell_{1i}$ and
$\ell_{2t}$ are packed into the rows of height $\frac 1{43}+\eps$
in the same bins. This leads to the definition
$\Omega_{32}=\frac{84}{168}$.

\medskip

Bounding $OPT_{40}$ is done as follows. Create $\frac{N}{4}$ bins
with a row of every height out of $\frac 12+\eps$, $\frac
13+\eps$, $\frac 17+\eps$, and $\frac{1}{42}+\eps$. This is
possible since $\eps = 0.0001$. Every row of height $\frac
12+\eps$ is split into four columns of width $\frac 14$. This
allows us to pack all items of type $\ell_{40}$. Additionally, we
can pack one item of each type $\ell_{ji}$ for any $j\in\{1,2,3\}$
and any $i$ into the other rows, leaving $\frac{3N}{4}$ items of
each such type unpacked. These items are packed into
$\frac{3N}{4\cdot 2}$ additional bins, each having two rows of
every height excluding $\frac{1}{2}+\eps$. Thus, we let
$\Omega_{40} = \frac{105}{168}$.

\medskip

Bounding $OPT_{41}$ is done similarly, but pairs of a type
$\ell_{40}$ item and a type $\ell_{41}$ item are packed into areas
of width $\frac 12$, so they occupy $\frac{N}{2}$ bins and
$\frac{N}2$ items of each type $\ell_{ji}$ for $j=1,2,3$ remain
unpacked and they require $\frac{N}{4}$ bins, leading to the
definition $\Omega_{31}=\frac{126}{168}$. For $OPT_{32}$, triples
of a type $\ell_{30}$ item, a type $\ell_{31}$ item, and a type
$\ell_{32}$ item are packed into areas of width $1$, so they
occupy $N$ bins, and all items of other types are packed into the
rows of the other three heights. This leads to the definition
$\Omega_{32}=1$.
\end{proof}

We are left with the task of bounding $V_{ij}$. The bounds will be
proved using a sequence of lemmas, where the first one is general
an it is used in several proofs.

\begin{lemma}
Let $b_w$ and $b_h$ be positive integers. Consider an item size
such that the width is in $(\frac 1{b_w+1},1]$, and the height is
in $(\frac 1{b_h+1},1]$. Consider a bin that contains $f$ items of
this type (and possibly other items). Then, $f \leq b_w \cdot
b_h$.
\end{lemma}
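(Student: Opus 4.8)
The plan is to use a grid/pigeonhole argument on the interior of the bin, which is the standard way to bound how many simultaneously wide-and-tall items fit in a unit square. I would place the bin as the unit square $[0,1]^2$ and introduce the $b_w$ vertical grid lines $x=\frac{m}{b_w+1}$ for $1\le m\le b_w$ together with the $b_h$ horizontal grid lines $y=\frac{n}{b_h+1}$ for $1\le n\le b_h$. These lines meet in exactly $b_w\cdot b_h$ grid points inside the bin, and this count is precisely the bound we are after, so the goal becomes charging each of the $f$ items to a distinct grid point.

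The first key step is to show that every item of the given type has a vertical grid line passing strictly through its interior. Writing the item's horizontal extent as an interval whose length equals its width, which is strictly greater than $\frac{1}{b_w+1}$, I would observe that such an interval cannot be contained in any single closed slab $[\frac{m}{b_w+1},\frac{m+1}{b_w+1}]$, since each slab has length exactly $\frac{1}{b_w+1}$; hence some grid value $\frac{m}{b_w+1}$ lies strictly between the item's left and right edges. Because the item lies inside $[0,1]$, this grid value must be one of the interior ones with $1\le m\le b_w$ rather than an endpoint $0$ or $1$. Applying the identical reasoning to the height (strictly greater than $\frac{1}{b_h+1}$) produces an interior horizontal grid line $y=\frac{n}{b_h+1}$ strictly inside the item's vertical extent. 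Consequently the grid point $(\frac{m}{b_w+1},\frac{n}{b_h+1})$ lies in the open interior of the item.

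The second step converts this into the bound by a counting argument. Since the rectangles are packed without overlapping and only their boundaries may touch, distinct items have disjoint interiors, so no grid point can lie in the interior of two different items. Assigning to each of the $f$ items a grid point contained in its interior therefore defines an injection into the set of $b_w\cdot b_h$ grid points, which gives $f\le b_w\cdot b_h$. Equivalently, summing over the items the number of grid points in each item's interior is at least $f$, while summing over the grid points the number of items whose interior contains them is at most $b_w\cdot b_h$; both sums count the same incidences, so $f\le b_w\cdot b_h$.

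The only delicate point — and the step I expect to need the most care — is the strictness of the size inequalities. Because items are allowed to touch one another's boundaries and the boundary of the bin, a grid point lying merely on the shared boundary of two items could not be charged uniquely and would break the injection. The strict width bound $w>\frac{1}{b_w+1}$ (and likewise the strict height bound) is exactly what forces the grid line into the \emph{open} interval between the item's edges, so the chosen grid point lands in the open interior and the disjointness-of-interiors argument applies without ambiguity. I would therefore state this strictness explicitly at the moment the slab-containment contradiction is derived, as it is the hinge on which the whole argument turns.
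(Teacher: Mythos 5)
Your proof is correct, but it takes a different route from the paper's. The paper uses only \emph{one} family of lines: it draws $b_h$ horizontal lines at spacing $\frac{1}{b_h+1}$, argues that every item's interior meets at least one line (by the height bound), and then bounds the number of items meeting any single line by $b_w$, because each such item covers a sub-segment of that line of length strictly greater than $\frac{1}{b_w+1}$ and disjoint interiors force these segments to be internally disjoint, so at most $b_w$ fit in a unit-length line; associating each item to one of its lines gives $f \leq b_w \cdot b_h$. You instead use \emph{both} families of grid lines, show via the strict width and height bounds that each item's open interior must contain one of the $b_w \cdot b_h$ interior grid points (your slab argument for this is sound: if $\frac{m}{b_w+1}$ is the least grid value exceeding the item's left edge, then it is also strictly less than the right edge since the width exceeds $\frac{1}{b_w+1}$, and it is interior since the item lies in $[0,1]$), and conclude by injectivity from disjointness of interiors. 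Both are pigeonhole arguments of comparable length, and your attention to why strict inequalities matter is exactly right. The trade-off: your grid-point injection is symmetric in the two coordinates and arguably cleaner as a standalone statement, whereas the paper's asymmetric line-based formulation is the one it reuses repeatedly in the subsequent lemmas (for $V_{21}$, $V_{31}$, $V_{1(k-1)}$, etc.), where items of \emph{different} types with different widths are associated with horizontal lines and weights are apportioned per line --- a mixed-type weighted analysis that the line framework supports directly but a point-counting injection does not, since there the per-line capacity depends on which widths occur together on the line.
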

\begin{proof}
Consider the bin and draw $b_h$ horizontal lines. Considering also
the bottom and top of the bin, the distances between any two
consecutive lines will be $\frac 1{b_h+1}$. Since the height of
the items is above $\frac 1{b_h+1}$, every item contains a part of
at least one line in its interior (such a line is not the bottom
or top). Since the width of every item is above $\frac 1{b_w+1}$,
and items cannot overlap (except for their boundary), there can be
at most $b_w$ items containing a part of a line. For every item,
associate it with a line that it contains a part of it (if there
is more than one such line, choose one arbitrarily). As there are
$b_h$ lines with at most $b_w$ items each, there are at most $b_w
\cdot b_h$ items of this type.
\end{proof}

\medskip

The last lemma shows in particular that all optimal solutions have
order of growth $\Omega(N)$, as the first $N$ items have sides
larger than $\frac{1}{5^{k-2}}$ and $\frac{1}{43}$, respectively,
so the cost of any solution is at least $\frac{N}{42 \cdot
(5^{k-2}-1)}$. An upper bound of $O(N)$ on the cost of an optimal
solution for every input follows from the total number of items
which is $(k+9)N$.

\medskip

We use the concept of dominance as in \cite{Blitz}. For an item
type $\ell_{ji}$ and an item of type $\ell_{j'i'}$, if there are
integers $c_w$ and $c_h$ such that $w_{ji} \geq c_w \cdot
w_{j'i'}$ and $h_{ji} \geq c_h \cdot h_{j'i'}$, while $v_{ji} \leq
c_w \cdot c_h \cdot v_{j'i'}$, we say that type $\ell_{j'i'}$
$(c_w,c_h)$-dominates (or simply dominates) type $\ell_{ji}$ in
the sense that in the calculation of the maximum weight of any
feasible bin, items of type $\ell_{ji}$ do not need to be
considered, as every such item can be replaced with $c_w \cdot
c_h$ items of type $\ell_{j'i'}$, without decreasing the total
weight. Note that the dominance relation is transitive. The value
$c_w \cdot c_h$ is called the factor of dominance.


\begin{lemma}
\begin{enumerate}

\item For every $j\in\{2,3,4\}$ and every $i=0,1$, type
$\ell_{ji}$ dominates $\ell_{j(i+1)}$.

\item For $i=1,2,\ldots,k-1$, type $\ell_{1i}$ dominates
$\ell_{1(i+1)}$.

\item Item type $\ell_{1(k-2)}$ dominates item type $\ell_{20}$.

\item Item type $\ell_{20}$ dominates item type $\ell_{30}$.

\item Item type $\ell_{30}$ dominates item type $\ell_{40}$.
\end{enumerate}

\end{lemma}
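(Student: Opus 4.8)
The plan is to verify each of the five dominance claims by exhibiting explicit integers $c_w$ and $c_h$ and checking the three defining inequalities $w_{ji}\ge c_w\cdot w_{j'i'}$, $h_{ji}\ge c_h\cdot h_{j'i'}$, and $v_{ji}\le c_w\cdot c_h\cdot v_{j'i'}$ from the dominance definition. Since all five items being dominated and their dominators live in the same height class in claims (1) and (2), and differ by at most one height class in claims (3)--(5), the natural guess is $c_h=1$ throughout (so that the height inequality becomes trivial, as the heights are equal or the dominator is taller), and $c_w=2$ in the width-doubling cases. First I would read off from Table \ref{tabtab} the relevant widths, heights, and weights for each pair.

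For claim (1), within a fixed $j\in\{2,3,4\}$, I expect to use $c_w=2,c_h=1$ when passing from $\ell_{j0}$ or $\ell_{j1}$ (width near $\frac14$) to $\ell_{j2}$ is reversed -- so I must be careful about direction: the claim is that $\ell_{ji}$ dominates $\ell_{j(i+1)}$, meaning the wider/heavier item $\ell_{j(i+1)}$ is replaceable by copies of $\ell_{ji}$. For $i=1$, $\ell_{j2}$ has width $\frac12+\Theta(\delta)$ and weight double that of $\ell_{j1}$, so $c_w=2$ gives $w_{j1}\ge 2w_{j2}$? That fails, so instead the roles mean $\ell_{j1}$ dominates $\ell_{j2}$ via $w_{j2}\ge 2 w_{j1}$ being false too; I would instead check the correct orientation, namely that $2w_{j1}\le w_{j2}$ is the needed inequality $w_{j(i+1)}\ge c_w w_{ji}$ once I match the definition's variables correctly. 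The key bookkeeping step is to align the subscripts $(ji)$ and $(j'i')$ in the definition with the dominator/dominatee in each claim, then the $\delta$-perturbations (engineered precisely so that $\frac14+\text{stuff}$ doubles past $\frac12+\text{stuff}$) make the width inequalities hold; the weight inequalities ($8\le 2\cdot4$, $12\le2\cdot6$, etc.) hold with equality or slack.

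For claim (2), consecutive $\ell_{1i}$ types differ in width by a factor of $5$ (or $2$ at the top end), with weights differing by the same factor, so I would take $c_w=5,c_h=1$ for $1\le i\le k-3$ and $c_w=2$ for $i=k-1$, checking $w_{1(i+1)}\ge 5 w_{1i}$ (established in the text as equality up to the $\delta$-terms) and $v_{1(i+1)}=5 v_{1i}$. Claims (3)--(5) cross height classes: $\ell_{1(k-2)}$ has height $\frac1{43}+\eps$ while $\ell_{20}$ has height $\frac17+\eps$, so here $c_h\ge 2$ is forced and I would pick $c_h$ so that $(\frac1{43}+\eps)c_h\le \frac17+\eps$ allows $c_h$ as large as $6$; combined with the width ratio (widths near $\frac1{4\cdot 5^{k-4}}$ versus $\frac14$) giving a large $c_w$, the product $c_w c_h$ must dominate the weight ratio $\frac{v_{20}}{v_{1(k-2)}}$. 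The main obstacle is precisely these cross-class cases: I must choose $c_w,c_h$ so that all three inequalities hold simultaneously, which requires verifying that the weight ratio does not exceed the product of the achievable width and height multiplicities, using the carefully chosen numeric weights and the bounds $w_{j0}>\frac15$ that appear in the setup. The $\delta$-inequalities are routine once the integers are fixed, so the real content is selecting the right $(c_w,c_h)$ for each of the five pairs.
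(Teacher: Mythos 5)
Your overall method is exactly the paper's: for each of the five pairs, exhibit integers $(c_w,c_h)$ and verify the three inequalities of the dominance definition (with the dominated, later type playing the role of $\ell_{ji}$ there). However, two of your concrete steps would fail if executed as written. First, in claim (2) you split into ``$c_w=5$ for $1\le i\le k-3$'' and ``$c_w=2$ for $i=k-1$'' and assert that consecutive widths and weights differ by the same factor; this omits the transition $i=k-2$, where neither assertion holds: $w_{1(k-1)}/w_{1(k-2)}=\frac{5(1+2^{40}\delta)}{4(1+\delta)}\approx\frac54$ is not an integer at least $2$, and the weights are \emph{equal} ($v_{1(k-2)}=v_{1(k-1)}=1$). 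The paper handles this case with $(c_w,c_h)=(1,1)$, using only $w_{1(k-2)}<w_{1(k-1)}$ together with equal heights and weights. Second, in claim (3) you misread the width of $\ell_{1(k-2)}$: it is $w_{1(k-2)}=\frac{1+\delta}{5^{k-(k-2)-1}}=\frac{1+\delta}{5}$, not ``near $\frac{1}{4\cdot 5^{k-4}}$'', so no ``large $c_w$'' is available; in fact $c_w=1$ is forced, since $2w_{1(k-2)}>\frac25>w_{20}$. The dominance rests entirely on the height multiplicity: $6\left(\frac1{43}+\eps\right)<\frac17+\eps$ gives $c_h=6$, the width inequality $w_{1(k-2)}<w_{20}$ holds since $\delta$ is tiny, and the weight check is $v_{20}=4\le 6=c_wc_h\cdot v_{1(k-2)}$ --- exactly the paper's argument.

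Two smaller points. In claim (1) your direction-chasing lands in the right place, but your intermediate assertion that $w_{j2}\ge 2w_{j1}$ is false is itself incorrect: by construction $w_{j2}=\frac12+2^{31-10(j-2)}\delta=2w_{j1}$ exactly, so the needed inequality holds with equality, and $(c_w,c_h)=(2,1)$ works because $v_{j2}=2v_{j1}$; the $i=0$ case is the trivial $(1,1)$ (smaller width, equal height and weight). Claims (4) and (5), which you leave unexecuted, are immediate but should be stated: for (4), $(c_w,c_h)=(1,2)$ since $2\left(\frac17+\eps\right)<\frac13+\eps$, $w_{20}<w_{30}$, and $v_{30}=6\le 2v_{20}=8$; for (5), $(1,1)$ since $\ell_{30}$ is smaller in both dimensions and $v_{30}=v_{40}=6$. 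So your plan is the right one and matches the paper's route, but as it stands it has the two genuine gaps above (the missing $i=k-2$ case, and a claim-(3) argument built on a wrong width), plus easy cases left unverified.
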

\begin{proof}
\begin{enumerate}
\item For every $j \in \{2,3,4\}$, the type $\ell_{j0}$ dominates
$\ell_{j1}$ since the width of the former type is smaller, and
their heights and weights are equal. Type $\ell_{j1}$ dominates
type $\ell_{j2}$ since the width of the former is twice as small,
their heights are equal, and the weight ratio satisfies
$v_{j2}/v_{j1}=2$.

\item For $1 \leq i \leq k-3$, type $\ell_{1i}$ dominates type
$\ell_{1(i+1)}$ as their heights are equal, and
$\frac{w_{1(i+1)}}{w_{1i}}=\frac{v_{1(i+1)}}{v_{1i}}=5$. Type
$\ell_{1(k-2)}$ dominated type $\ell_{1(k-1)}$ since their heights
are equal, $w_{1(k-2)}<w_{1(k-1)}$ and $v_{1(k-2)}=v_{1(k-1)}$.
Type $\ell_{1(k-1)}$ dominated type $\ell_{1k}$ since their
heights are equal, $2 \cdot w_{1(k-1)} = w_{1(k-1)}$ and $2 \cdot
v_{1(k-1)}=v_{1k}$.

\item The height of item type $\ell_{1(k-2)}$ is $\frac
1{43}+\eps$, and the height of item type $\ell_{20}$ is $\frac
17+\eps$. We have $6(\frac1{43}+\eps)<\frac 17+\eps$ as
$\eps<0.0001$. The width of item type $\ell_{1(k-2)}$ is $\frac
{1+\delta}{5}$, and the width of item type $\ell_{20}$ is $\frac
14-2^{32}\delta$. We have $\frac{1+\delta}{5}<\frac
14-2^{32}\delta$ as $2^{64}\delta <1$. As the weight of six items
of type $\ell_{1(k-2)}$ is $6$, while the weight of one item of
type $\ell_{20}$ is $4$, the domination holds.

\item Item type $\ell_{20}$ has height $\frac17+\eps$ while item
type $\ell_{30}$ has height $\frac 13+\eps$, and we have $2(\frac
17+\eps)<\frac 13+\eps$, as $\eps < 0.0001$. Item type $\ell_{20}$
has smaller width than item type $\ell_{30}$. The weight of two
items of type $\ell_{20}$ is $8$ while the weight of one type
$\ell_{30}$ item is $6$. Thus, the domination holds.

\item Item type $\ell_{30}$ has both smaller height and smaller
width than an item of type $\ell_{40}$ and they have the same
weights. Thus, the domination holds.
\end{enumerate}
\end{proof}

%
%
%
%
%
%
%
%
%
%
%
%

\begin{lemma}
In the following cases it  is sufficient to consider bins
containing only items of type $\ell_{ji}$ for the computation of
$V_{ji}$.

\begin{enumerate}

\item The case $j=4$ and $i=0,1,2$.

\item The cases $j=2,3$ and $i=0$.

\item The case $j=1$ and $i \leq k-2$. \end{enumerate}

In the cases $j=1$ and $i=k-1,k$, it is sufficient to consider
only $\ell_{1i}$ and $\ell_{20}$. In the cases $j=2$ and $i=1,2$
it is sufficient to consider only $\ell_{ji}$ and $\ell_{30}$. In
the cases $j=3$ and $i=1,2$ it is sufficient to consider only
$\ell_{ji}$ and $\ell_{40}$
\end{lemma}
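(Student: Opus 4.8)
The plan is to derive everything from the five dominance relations of the previous lemma together with the transitivity of dominance. The operative fact is the one built into the definition of dominance: if a type $T'$ in the candidate set $(c_w,c_h)$-dominates another type $T$ of that set, then inside any bin each $T$-item may be replaced by the $c_w\cdot c_h$ copies of $T'$ that tile its bounding box, which keeps the packing feasible and does not lower the total weight. Consequently, when maximizing the weight of a bin over the candidate set $\{\ell_{ji}\}\cup\{\text{types later than }\ell_{ji}\}$, we may assume the bin uses only types that are undominated within that set. So the lemma reduces to exhibiting, for each candidate set, a small dominating subset, and I would obtain these subsets by chaining the given relations.

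First I would record the chains produced by transitivity. Relation~1 yields $\ell_{j0}$ dominating both $\ell_{j1}$ and $\ell_{j2}$ for each $j\in\{2,3,4\}$; relation~2 yields $\ell_{1i}$ dominating every $\ell_{1i'}$ with $i'\ge i$; and relations~3--5 give the cross-group links in which $\ell_{1(k-2)}$, $\ell_{20}$, $\ell_{30}$, $\ell_{40}$ each dominate the next. Splicing these, $\ell_{20}$ dominates every type of groups $j=2,3,4$, the type $\ell_{30}$ dominates every type of groups $j=3,4$, the type $\ell_{40}$ dominates all of group $j=4$, and $\ell_{1i}$ for any $i\le k-2$ dominates all of these as well.

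The three itemized cases are then immediate: for $j=4$ the candidate set is contained in the chain headed by $\ell_{4i}$; for $j\in\{2,3\}$, $i=0$ the type $\ell_{j0}$ heads a chain covering all later groups; and for $j=1$, $i\le k-2$ the type $\ell_{1i}$ sits atop the full chain reaching down through $\ell_{1(k-2)}$ into groups $2,3,4$. In each, a single surviving type remains. The remaining families are exactly the places where one link of a chain is absent, which is why a second type must be kept, and the obstruction is always one of \emph{width}: the types $\ell_{1(k-1)}$ and $\ell_{1k}$ have widths exceeding $\tfrac14$ (indeed above $\tfrac14$ and above $\tfrac12$), so neither can be fitted into the width $w_{20}<\tfrac14$ and hence neither dominates $\ell_{20}$; likewise, for $j=2,3$ and $i=1,2$ the type $\ell_{ji}$ has width above $\tfrac14$ while $w_{(j+1)0}<\tfrac14$, so $\ell_{ji}$ cannot dominate $\ell_{(j+1)0}$. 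In each of these cases the next group leader ($\ell_{20}$ or $\ell_{(j+1)0}$) must therefore be retained, and it dominates all the still-later types, leaving precisely the stated pair $\{\ell_{ji},\,\ell_{20}\}$ or $\{\ell_{ji},\,\ell_{(j+1)0}\}$. I expect no genuine difficulty here beyond bookkeeping; the one point to get right is the direction of these width comparisons, since it is the group-leader types $\ell_{j0}$ (width just \emph{below} $\tfrac14$) that are narrow enough to subdivide the next group's items, whereas the $i\ge1$ types are just too wide.
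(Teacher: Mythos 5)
Your proposal is correct and takes essentially the same route as the paper: the three single-type cases follow by transitivity of the dominance chains from the preceding lemma, and in the remaining cases every type later than $\ell_{ji}$ is dominated by one of the two retained types ($\ell_{1(k-1)}$ or $\ell_{1k}$ together with $\ell_{20}$, respectively $\ell_{ji}$ together with $\ell_{(j+1)0}$). Your added width-comparison remarks explaining why the broken chain links force a second type are accurate but strictly unnecessary, since the lemma asserts only sufficiency of the retained sets, which is all the paper proves as well.
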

\begin{proof}
The three cases where one item type can be considered follow by
transitivity of domination, since every such type dominates every
later type.

In the other cases the mentioned two item types are sufficient as
for every later type (later than $\ell_{ji}$, at least one of
these two mentioned types dominates the later one.
\end{proof}

\begin{corollary}
All $V_{ji}$ values in the table for the next cases are correct.

The case $j=4$ and $i=0,1,2$, the cases $j=2,3$ and $i=0$, and the
case $j=1$ and $i \leq k-2$.
\end{corollary}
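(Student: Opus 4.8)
The plan is to combine the two preceding lemmas. For each listed pair $(j,i)$, the reduction lemma shows that in computing $V_{ji}$ it suffices to consider bins containing only items of type $\ell_{ji}$; hence $V_{ji}$ equals $v_{ji}$ times the maximum number $f$ of type-$\ell_{ji}$ items that fit in a single bin. I would bound $f$ via the packing lemma, choosing $b_w$ and $b_h$ to be the unique integers with $w_{ji} \in (\frac{1}{b_w+1}, \frac{1}{b_w}]$ and $h_{ji} \in (\frac{1}{b_h+1}, \frac{1}{b_h}]$. These are the smallest admissible values (the condition of the packing lemma is exactly $w_{ji} > \frac{1}{b_w+1}$ and $h_{ji} > \frac{1}{b_h+1}$), so $f \leq b_w \cdot b_h$ and therefore $V_{ji} \leq b_w \cdot b_h \cdot v_{ji}$, a value I would then check against the table entry.

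The routine part is locating each width and height in its interval. For the heights: $h_2 = \frac{1}{7}+\eps \in (\frac17, \frac16)$ gives $b_h=6$; $h_3 = \frac13+\eps \in (\frac13,\frac12)$ gives $b_h=2$; $h_4 = \frac12+\eps \in (\frac12,1]$ gives $b_h=1$; in each case only $\eps<0.0001$ is needed to stay below the next threshold (e.g.\ $\frac16-\frac17=\frac1{42}>\eps$). For the widths of the other families: $w_{j0} \in (\frac15,\frac14)$ gives $b_w=4$ (using the stated $w_{j0}>\frac15$), $w_{41} \in (\frac14,\frac13)$ gives $b_w=3$, and $w_{42} \in (\frac12,1]$ gives $b_w=1$; the $2^{c}\delta$ perturbations are far too small to cross any threshold. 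Multiplying by the corresponding weights, one gets $V_{40}\leq 4\cdot 6=24$, $V_{41}\leq 3\cdot 6=18$, $V_{42}\leq 1\cdot 12=12$, $V_{20}\leq 24\cdot 4=96$, and $V_{30}\leq 8\cdot 6=48$, each matching the table.

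The one step that needs care, and the expected obstacle, is the family $j=1$, $1\leq i\leq k-2$, where $w_{1i}=\frac{1+\delta}{5^{k-i-1}}$ lies only slightly above a negative power of $5$. Here I would verify that $w_{1i}\in\left(\frac{1}{5^{k-i-1}},\,\frac{1}{5^{k-i-1}-1}\right)$, so that $b_w=5^{k-i-1}-1$; the lower bound is immediate, and the upper bound is equivalent to $\delta(5^{k-i-1}-1)<1$, which follows from $5^{k-i-1}\delta<1$. The latter is guaranteed by $\delta<2^{-(3k+50)}$ together with $5^{k-i-1}\leq 5^{k-2}<8^{k}=2^{3k}$, giving $5^{k-i-1}\delta<2^{-50}<1$. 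This is precisely the role of widening the items to $\frac{1+\delta}{5^{k-i-1}}$ rather than exactly $\frac{1}{5^{k-i-1}}$: it forces the bucket $b_w=5^{k-i-1}-1$ while the smallness of $\delta$ keeps the width strictly below the next threshold $\frac{1}{5^{k-i-1}-1}$. Since $h_{1i}=\frac{1}{43}+\eps\in(\frac1{43},\frac1{42})$ yields $b_h=42$, the packing lemma gives $f\leq 42(5^{k-i-1}-1)$, and multiplying by $v_{1i}=\frac{1}{5^{k-i-2}}$ produces $V_{1i}\leq 42\left(5-\frac{1}{5^{k-i-2}}\right)$, exactly the tabulated value.
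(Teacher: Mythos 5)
Your proposal is correct and follows essentially the same route as the paper: reduce to bins containing a single item type via the domination lemma, then apply the line-drawing packing lemma with $b_w,b_h$ determined by the widths and heights, and multiply by the weights, recovering exactly the tabulated values including $V_{1i}\leq 42\bigl(5-\frac{1}{5^{k-i-2}}\bigr)$ from $b_w=5^{k-i-1}-1$, $b_h=42$. Your additional verification that each dimension stays below the next threshold (e.g.\ $\delta(5^{k-i-1}-1)<1$) is not needed for the upper bounds, since the packing lemma only requires $w_{ji}>\frac{1}{b_w+1}$ and $h_{ji}>\frac{1}{b_h+1}$, but it is harmless and confirms the bounds are the tightest the lemma yields.
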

\begin{proof}
For these values we consider one type of item.

In the cases where $i=0$ and $j\geq 2$, the width of an item is in
$(\frac 15,\frac 14]$, so $b_w=4$, and the heights for $j=2,3,4$
are in $(\frac{1}{b_h+1}, \frac{1}{b_h}]$ for $b_h=6,2,1$,
respectively. Thus, taking the item weights into account we let
$V_{20}=4\cdot 24=96$, $V_{30}=6\cdot 8=48$, and $V_{40}=6\cdot
4=24$.

As $w_{41} > \frac 14$, we let $V_{41}= 6 \cdot 3 = 18$, and as
$w_{42}
> \frac 12$, we let $V_{42}= 12$.

Consider $j=1$, for which the height is above $\frac{1}{43}$. For
$1 \leq i \leq k-2$, we let $V_{1i}=\frac{1}{5^{k-i-2}} \cdot 42
\cdot (5^{k-i-1}-1)$ since the width of these items is above
$\frac{1}{5^{k-i-1}}$ and the height is above $\frac 1{43}$.
\end{proof}

\begin{lemma}
We have $V_{31}=42$ and $V_{32}=36$.
\end{lemma}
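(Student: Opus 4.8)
The plan is to invoke the previous lemma, which reduces the computation of $V_{31}$ to bins containing only items of types $\ell_{31}$ and $\ell_{40}$, and the computation of $V_{32}$ to bins containing only $\ell_{32}$ and $\ell_{40}$. In both cases every item has height strictly between $\frac13$ and $\frac23$: the types $\ell_{31},\ell_{32}$ have height $\frac13+\eps$ and $\ell_{40}$ has height $\frac12+\eps$. I would draw the two horizontal lines at heights $\frac13$ and $\frac23$, which split the bin into three strips of height exactly $\frac13$. Since each item is taller than $\frac13$ it cannot lie inside one strip, so its interior meets at least one line; since each item is shorter than $\frac23$ it cannot meet both. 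Hence every item has exactly one of the two lines passing through its interior, and I can classify the items of a bin by the line they cross.

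For a fixed line, the items crossing it occupy pairwise interior-disjoint subintervals whose lengths are the item widths, so the sum of these widths is at most $1$. The whole argument then reduces to checking, for each line, which multisets of widths fit. For $V_{31}$ I would verify the $\delta$-perturbed inequality $w_{31}+3w_{40}>1$ (so a line carrying at least one $\ell_{31}$ carries at most $3$ items) while $5w_{40}>1$ (so any line carries at most $4$ items, with $4$ only if all are of type $\ell_{40}$). Combining this with the fact, from the basic $(b_w,b_h)$-lemma, that a bin contains at most $4$ items of type $\ell_{40}$, I get that the total number of items is at most $7$: if some $\ell_{31}$ item is present it lies on a line that then holds at most $3$ items while the other holds at most $4$, and if no $\ell_{31}$ item is present there are at most $4$ items in all. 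As both types have weight $6$, this yields $V_{31}\le 6\cdot 7=42$.

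For $V_{32}$ the key width inequalities are $2w_{32}>1$ (so each line carries at most one $\ell_{32}$) and $w_{32}+2w_{40}>1$ (so a line carrying an $\ell_{32}$ carries at most one $\ell_{40}$). Writing $a$ and $b$ for the numbers of items of types $\ell_{32}$ and $\ell_{40}$, the first inequality gives $a\le 2$. I would then split into the cases $a=0,1,2$: when $a\le 1$ the bound $b\le 4$ (again from the basic lemma) gives weight $12a+6b\le 12+24=36$; when $a=2$ the two $\ell_{32}$ items must lie on different lines, each of which can then hold at most one $\ell_{40}$, forcing $b\le 2$ and weight $24+12=36$. Finally I would exhibit the matching packings — four $\ell_{40}$ items in a bottom row of height $\frac12+\eps$ together with three $\ell_{31}$ items above for $V_{31}$, and the same bottom row with a single $\ell_{32}$ above for $V_{32}$ — so that both values are attained and the bounds are tight.

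The step I expect to be the main obstacle is the $V_{32}$ bound, because the naive per-line count overshoots: each line alone can support weight $24$, so the two lines would suggest $48$. The real work is the cross-line observation that having $a=2$ spreads the two wide items onto separate lines and thereby suppresses the number of $\ell_{40}$ items to at most one per line. Beyond that, the only care needed is bookkeeping of the $\delta$-perturbations to confirm that each width inequality points the right way (for instance that $2w_{40}=\frac12-2^{13}\delta$ is large enough that $w_{32}+2w_{40}$ still exceeds $1$), which is routine given the chosen powers of $2$ in the widths.
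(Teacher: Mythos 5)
Your overall route is the paper's route --- two horizontal lines at heights $\frac13$ and $\frac23$, per-line width sums, and the $\delta$-perturbed inequalities $w_{31}+3w_{40}>1$ and $w_{32}+2w_{40}>1$ --- but one stated step is false: the claim that ``since each item is shorter than $\frac23$ it cannot meet both'' lines. An item's interior contains both lines exactly when its bottom is below $\frac13$ and its top is above $\frac23$, which only requires height strictly greater than $\frac13$, not at least $\frac23$. For instance, an $\ell_{31}$ item of height $\frac13+\eps$ placed with bottom at $\frac13-\frac{\eps}{2}$ has top $\frac23+\frac{\eps}{2}$ and straddles both lines, and an $\ell_{40}$ item of height $\frac12+\eps$ can do the same. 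So the clean partition of items by ``the line they cross'' does not exist; this is exactly why the paper writes ``if it intersects both lines, we associate it with one of them'' and then argues separately that, in the tight configuration, no item in fact intersects two lines.

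Fortunately your argument survives the repair with almost no change. The per-line bounds (widths of all items whose interior meets a fixed line sum to at most $1$) are valid regardless of double-crossing, so for $V_{31}$ the union bound ``items crossing line $A$ plus items crossing line $B$'' still caps the total at $3+4=7$ when some $\ell_{31}$ is present (every item crosses at least one line), giving weight at most $42$; and for $V_{32}$ with $a=2$, two items of width above $\frac12$ cannot cross a common line, and if one crossed both lines the other would have no line left, so they do lie on distinct lines, each then carrying at most one $\ell_{40}$, whence $b\leq 2$ and weight at most $36$. With that fix your proof is correct and is a mild streamlining of the paper's: for $V_{31}$ you use $w_{31}+3w_{40}>1$ directly where the paper assumes eight items and derives the contradiction $y_{40}\geq 2^{8}y_{31}$ from the same perturbation; your $V_{32}$ case analysis is essentially identical to the paper's. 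The attainability packings you add are correct but unnecessary, since the framework only requires upper bounds on the $V_{ji}$.
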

\begin{proof}
To prove the first bound, we consider types $\ell_{31}$ and
$\ell_{40}$. Since for these two types the widths are above $\frac
15$ and the heights are above $\frac 13$, no bin can contain more
than eight such items. Moreover, for type $\ell_{40}$ the width is
above $\frac12$, so no bin can contain more than four such items.
If the bin has at most seven items, we are done, as the weight of
any item of one of these types is $6$ and therefore we assume that
there are eight such items.

Recall that we use the word {\it intersecting} with the meaning
that the intersection is in the interior and not on the boundary.
Draw two horizontal lines with distances of $\frac 13$ between
consecutive lines including the top and bottom. Due to item
heights, every item intersects at least one line (contains a part
of a line in its interior) and we associate it with such a line.
If it intersects both lines, we associate it with one of them.

As all widths are above $\frac 15$, every line can intersect at
most four items, and as there are eight items, each associated
with one of the lines, we find that every line intersects exactly
four items, and no item intersects two lines. For a given line
(one of the two), let $y_{31} \geq$ and $y_{40}\geq $ be the
(integer) numbers of items of types $\ell_{31}$ and $\ell_{40}$
associated with this line, where $\ell_{31}+\ell_{40}=4$. We have
$$1 \geq y_{31}(\frac 14+2^{20}\delta)+y_{40}(\frac
14-2^{12}\delta)
=(y_{31}+y_{40})/4+\delta(2^{20}y_{31}-2^{12}y_{40})=1+\delta(2^{20}y_{31}-2^{12}y_{40}),
$$ which implies $y_{40} \geq 2^8 y_{31}$. The only solution is
$y_{40}=4$ and $y_{31}=0$. However, this proves that the bin has
eight type $\ell_{40}$ items, a contradiction.

To prove the second bound, note that a bin can contain at most two
items of type $\ell_{32}$, as their heights are above $\frac 13$
and their widths are above $\frac 12$. Since a bin contains at
most four items of type $\ell_{40}$, if there is at most one item
of type $\ell_{32}$, we are done, as $w_{32}=12$. Assume that
there are two such items. Drawing horizontal lines as before,
every $\ell_{32}$ type item overlaps at least one such line, and
as $w_{32} > \frac 12$, each item overlaps exactly one of the
lines. Since $(\frac 12+2^{21}\delta)+2(\frac 14-2^{12}\delta)>1$,
each line overlaps at most one item of type $\ell_{40}$. Since
every type $\ell_{40}$ item overlaps at least one line, there are
at most two such items, and the total weight is at most $2\cdot
12+2\cdot 6=36$.
\end{proof}

\begin{lemma}
We have $V_{21}=72$ and $V_{22}=68$.
\end{lemma}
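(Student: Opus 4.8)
The plan is to handle both bounds with a single geometric device: slice the bin by the six horizontal lines $y=i/7$, $i=1,\dots,6$, and use the reduction lemma so that only two types enter each computation, namely $\ell_{21}$ with $\ell_{30}$ for $V_{21}$, and $\ell_{22}$ with $\ell_{30}$ for $V_{22}$. Two facts drive everything. Geometrically, every $\ell_{21}$ or $\ell_{22}$ item, having height $\frac17+\eps>\frac17$, crosses at least one of the six lines, while every $\ell_{30}$ item, having height $\frac13+\eps>\frac27$, crosses at least two of them; and the items crossing any fixed line are horizontally disjoint, so their widths sum to at most $1$.

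For $V_{21}$ I would charge each $\ell_{21}$ item to one line it crosses and each $\ell_{30}$ item to two lines it crosses. If $a_\ell,b_\ell$ count the items charged to line $\ell$, the width bound $a_\ell(\frac14+2^{30}\delta)+b_\ell(\frac14-2^{22}\delta)\le 1$ rewrites as $\tfrac{a_\ell+b_\ell}{4}+\delta(2^{30}a_\ell-2^{22}b_\ell)\le 1$; since $2^{30}=256\cdot 2^{22}$, the case $a_\ell+b_\ell=4$ survives only as $(a_\ell,b_\ell)=(0,4)$, so in every admissible case $4a_\ell+3b_\ell\le 12$. Because $\sum_\ell a_\ell=p$ and $\sum_\ell b_\ell=2q$, where $p,q$ are the total item counts, the bin weight equals $\sum_\ell(4a_\ell+3b_\ell)\le 6\cdot 12=72$; and $18$ items of type $\ell_{21}$ (three per line in six rows) realize it, giving $V_{21}=72$.

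For $V_{22}$ the same charging is too weak: the per-line maximum of $8a+3b$ is again $12$ (at $(0,4)$), yielding only $\le 72$. This is the crux, and the obstacle is that the extremal line pattern would demand twelve $\ell_{30}$ items, which the bin cannot hold. I would therefore combine two independent bounds. A counting bound from the lines: a line meeting an $\ell_{22}$ item (width $>\frac12$, hence at most one per line) holds at most one $\ell_{30}$ item, since $(\frac12+2^{31}\delta)+2(\frac14-2^{22}\delta)>1$, whereas an $\ell_{22}$-free line holds at most four; letting $s$ be the number of lines meeting an $\ell_{22}$ item, the $\ell_{30}$ incidences total at most $24-3s$, and since each $\ell_{30}$ crosses at least two lines and the $\ell_{22}$ items occupy $s\ge p$ distinct lines, this gives $3p+2q\le 24$. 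The basic lemma (width $>\frac15$, height $>\frac13$) gives $q\le 8$. Then $3(p+q)=(3p+2q)+q\le 24+8=32$ forces $p+q\le 10$, and $8p+6q=2(3p+2q)+2(p+q)\le 48+20=68$, so $V_{22}\le 68$.

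The delicate part is precisely this $V_{22}$ argument: no single per-item charging reaches $68$, so one must inject the count $q\le 8$ and exploit integrality, as the linear relaxation of $3p+2q\le 24,\ q\le 8$ only yields $69\tfrac13$. I would finally verify the boundary width inequalities that the coefficients $2^{31},2^{22}$ of $\delta$ control, checking that one $\ell_{22}$ plus two $\ell_{30}$ overflow a line while one $\ell_{22}$ with one $\ell_{30}$, and four $\ell_{30}$ alone, each fit within width $1$.
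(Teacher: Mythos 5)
Your proof is correct, and while the $V_{21}$ half is essentially identical to the paper's argument (the same six lines at heights $i/7$, the same charging of one line per $\ell_{21}$ item and two lines per $\ell_{30}$ item, the same dichotomy forced by the $\delta$-perturbations, namely $a_\ell+b_\ell\le 3$ or $(a_\ell,b_\ell)=(0,4)$, and the same per-line weight cap of $12$), your $V_{22}$ endgame takes a genuinely different route. The paper and you share all the geometric ingredients: at most one $\ell_{22}$ item per line, at most one $\ell_{30}$ item on a line met by an $\ell_{22}$ item via $w_{22}+2w_{30}>1$, at most four $\ell_{30}$ items on an $\ell_{22}$-free line, each $\ell_{30}$ item crossing at least two lines, and the cardinality bound $q\le 8$ from the general lemma. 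But the paper then classifies the six lines by the number $x_0$ of $\ell_{22}$-free lines, bounds the number of $\ell_{30}$ items by $\left\lfloor (4x_0+x_1)/2 \right\rfloor$, and disposes of the cases $x_0\le 2$, $x_1\le 2$, and $x_0=x_1=3$ separately, whereas you aggregate everything into two global linear constraints, $3p+2q\le 24$ (via $2q\le 24-3s$ and $s\ge p$) and $q\le 8$, invoke integrality exactly once to deduce $p+q\le \lfloor 32/3 \rfloor = 10$, and close with the positive combination $8p+6q=2(3p+2q)+2(p+q)\le 48+20=68$. Your remark that the linear relaxation alone only yields $69\tfrac{1}{3}$ pinpoints precisely the role played by the paper's floor functions; your two-inequality argument replaces the paper's case analysis at no loss of rigor and is arguably cleaner. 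One cosmetic point: like the paper, you establish only upper bounds on $V_{22}$ (your eighteen-item configuration does witness $V_{21}=72$, but you give no matching configuration for $68$); this is harmless, since the paper explicitly notes that upper bounds on the $V_{ji}$ suffice for the lower-bound formula.
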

\begin{proof}
To prove the first bound, we consider types $\ell_{21}$ and
$\ell_{30}$.

Here we draw six horizontal lines with distances of $\frac 17$
between consecutive lines including the top and bottom. Due to
item heights, every type $\ell_{21}$ item intersects a line and we
associate it with one such line. Every type $\ell_{30}$ item
intersects at least two lines (as otherwise its height is at most
$\frac 27$) and we associate it with exactly two such lines.

As all widths are above $\frac 15$, every line intersects at most
four items, and it has at most four items associated with it. For
a given line (one of the six), let $y_{21}$ and $y_{30}$ be the
(integer) numbers of items of types $\ell_{21}$ and $\ell_{30}$
associated with it, where $\ell_{31}+\ell_{40} \leq 4$, as widths
are larger than $\frac 15$. We have $$1 \geq y_{21}(\frac
14+2^{30}\delta)+y_{30}(\frac 14-2^{22}\delta)
=(y_{21}+y_{30})/4+\delta(2^{30}y_{31}-2^{22}y_{40})  , $$ which
implies that either $y_{21}+y_{30}\leq 3$, or that $y_{30}=4$ and
$y_{21}=0$. The second option holds since in the case
$y_{21}+y_{30}\leq 3$ we get $y_{30} \geq 2^8 y_{21}$ similarly to
the proof of the previous lemma.

For every item associated with one line, we assign its weight to
the line, and for items associated with two lines, we assign half
of the weight to each such line, so the weight is split equally
between its two associated lines. Thus, an item of type
$\ell_{21}$ assigns a weight of $4$ to its line, and an item of
type $\ell_{30}$ assigns a weight of $3$ to each of its lines.

Consider a specific lines again. If $y_{21}+y_{30}\leq 3$, the
line is assigned at most a weight of $12$. In the case $y_{30}=4$
and $y_{21}=0$, it is also assigned a weight of $12$ (as the share
of weight for every item is $3$). As there are six lines, the
total weight is at most $6 \cdot 12=72$.

To prove the second bound, we consider types $\ell_{22}$ and
$\ell_{30}$. We draw lines and associate items as above. Every
line can intersect at most one item of type $\ell_{22}$ as the
width of such an item is above $\frac 12$. If a line does not have
such an item associated with it, it can have at most four
$\ell_{30}$ items associated with it. Otherwise, since
$w_{22}+2 \cdot w_{30}>1$, it can have at most one type
$\ell_{30}$ item associated with it. Let $x_0$ be the number of
lines without an $\ell_{22}$ item associated with them and let
$x_1=6-x_0$ be the number of lines having an $\ell_{22}$ item
associated with them. As every $\ell_{30}$ item is associated with
two lines, the number of $\ell_{30}$ items is at most
$$\left\lfloor \frac{1}{2} \cdot (4x_0+x_1)
\right\rfloor=\left\lfloor \frac{1}{2} \cdot (3x_0+6)
\right\rfloor=\left\lfloor \frac{3\cdot x_0}{2} \right\rfloor+3 .
$$ The number of $\ell_{30}$ items is also at most $8$, as their heights are above $\frac 13$,
and their widths are above $\frac 15$.

The number of $\ell_{22}$ items is $x_1$. Using the weights of
items ($8$ for type $\ell_{22}$ and $6$ for type $\ell_{30}$), the
total weight is at most
$$6 \cdot (3+\left\lfloor \frac{3x_0}{2}  \right\rfloor) +
8x_1=18+6\left\lfloor \frac{3x_0}{2}
\right\rfloor+8(6-x_0)=66+6\left\lfloor \frac{3x_0}{2}
\right\rfloor-8x_0 \leq 66+x_0  , $$ so for $x_0\leq 2$ the total
weight does not exceed $68$. The total weight is also at most $6
\cdot 8 +8x_1$, using the property that there are $x_1$ items of
type $\ell_{22}$ and at most eight items of type $\ell_{30}$, so
for $x_1 \leq 2$, the weight does not exceed $64$. The only
remaining case is $x_0=x_1=3$. In this case we have $\left\lfloor
\frac{3x_0}{2} \right\rfloor = 4$, and
$$66+6\left\lfloor \frac{3x_0}{2} \right\rfloor-8x_0=66+6\cdot 4 - 8
\cdot 3=66  . $$
\end{proof}

\begin{lemma}
We have $V_{1(k-1)}=126$ and $V_{1k}=112$.
\end{lemma}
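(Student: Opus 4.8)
The plan is to bound each of $V_{1(k-1)}$ and $V_{1k}$ from above by the tabulated value, using the horizontal-line counting technique already employed for $V_{31}$, $V_{21}$, and $V_{22}$, combined with the reduction supplied by the domination lemma. By that reduction, for $V_{1(k-1)}$ it suffices to consider bins containing only items of types $\ell_{1(k-1)}$ and $\ell_{20}$, and for $V_{1k}$ only items of types $\ell_{1k}$ and $\ell_{20}$. In both cases I would fix a bin and draw the $42$ horizontal lines at heights $\frac{m}{43}$ for $m=1,\ldots,42$, so that consecutive gaps (including the bottom and top of the bin) have height $\frac{1}{43}$.

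Since every item has height above $\frac{1}{43}$, each item contains at least one of these lines in its interior. An item of type $\ell_{20}$ has height above $\frac17 > \frac{6}{43}$ (because $\frac{43}{7}>6$), so it in fact contains at least six of the lines; I would associate each such item with exactly six lines and each item of type $\ell_{1(k-1)}$ (resp. $\ell_{1k}$) with exactly one line that it crosses. The accounting step then distributes item weights to lines: an $\ell_{1(k-1)}$ item (weight $1$) or an $\ell_{1k}$ item (weight $2$) gives its full weight to its single line, while an $\ell_{20}$ item (weight $4$) gives $\frac{4}{6}=\frac23$ to each of its six lines. This distributes the total weight of the bin exactly across the $42$ lines, so it suffices to bound the weight received by a single line and multiply by $42$.

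The crux is the per-line bound, which rests on the width inequalities. For a fixed line, writing $a$ and $b$ for the numbers of the two relevant types actually crossing it, the constraint that the total width is at most $1$, together with the perturbations by powers of $2$ times $\delta$, forces a sharp dichotomy, exactly as in the proofs of $V_{21}$ and $V_{31}$. For $V_{1(k-1)}$, the constraint $a(\frac14+2^{38}\delta)+b(\frac14-2^{32}\delta)\le 1$ gives either $a+b\le 3$ or $(a,b)=(0,4)$; in the first case the line receives at most $a+\frac23 b\le a+b\le 3$, and in the second at most $\frac83<3$, so every line gets at most $3$ and the total is at most $42\cdot 3=126$. For $V_{1k}$, the width of $\ell_{1k}$ exceeds $\frac12$, so a line carries at most one $\ell_{1k}$ item, and since $(\frac12+2^{39}\delta)+2(\frac14-2^{32}\delta)>1$ such a line carries at most one $\ell_{20}$ item as well; hence a line receives at most $2+\frac23=\frac83$ (one item of each type) or at most $4\cdot\frac23=\frac83$ (four $\ell_{20}$ items and no $\ell_{1k}$), giving total at most $42\cdot\frac83=112$.

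I expect the main obstacle to be the bookkeeping in the per-line dichotomy: one must verify that the chosen $\delta$-perturbations genuinely exclude every mixed configuration that would otherwise push a line above its bound, in particular that an $\ell_{1k}$ item cannot share a line with two $\ell_{20}$ items and that a full four-item line must be purely $\ell_{20}$, and that the ``six lines'' count for $\ell_{20}$ is justified. These are the same delicate width comparisons that appear in the earlier $V_{ji}$ lemmas, so I would reuse that machinery. The matching construction of $42$ rows of three $\ell_{1(k-1)}$ items confirms that the first bound is tight, while for $V_{1k}$ the tabulated value $112$ is needed only as an upper bound.
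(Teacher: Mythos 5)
Your proposal is correct and follows essentially the same route as the paper: reduce to types $\ell_{20}$ and $\ell_{1(k-1)}$ (resp.\ $\ell_{1k}$) via the domination lemma, draw $42$ lines at spacing $\frac{1}{43}$, associate each $\ell_{20}$ item with six lines and split its weight as $\frac{2}{3}$ per line, and use the $\delta$-perturbed width inequalities (e.g.\ $3w_{20}+w_{1(k-1)}>1$ and $w_{1k}+2w_{20}>1$) to bound each line's configuration. Your per-line weight cap of $3$ (resp.\ $\frac{8}{3}$) multiplied by $42$ is a marginally cleaner bookkeeping than the paper's global count $4\left\lfloor \frac{x_2+2x_1+4x_0}{6} \right\rfloor+(3x_3+2x_2+x_1)$ over lines classified by their $\ell_{1(k-1)}$-load, but it is the same argument in substance and yields the identical bounds $126$ and $112$.
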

\begin{proof}
We will consider type $V_{20}$ for all bounds, and type
$\ell_{1(k-1)}$ or $\ell_{1k}$ for the two bounds. We will use the
property that the width of type $\ell_{20}$ is above $\frac 15$.

$\frac 14 -2^{32}\delta > \frac 14 - \frac 1{2^{30}}> 0.24$, so
any horizontal line can intersect the interior of at most four
such items.

%
%

Recall that $$ w_{20} + 3 \cdot w_{1(k-1)} \geq 2 \cdot
 w_{20}
 + 2 \cdot w_{1(k-1)} \geq 3 \cdot w_{20} + w_{1(k-1)}
>1$$ and $$2 \cdot w_{20} +w_{1k} = 2 \cdot w_{20} +2 \cdot w_{1(k-1)}
>1  . $$ We also use the property  $2w_{1k}= 4 w_{1(k-1)} >1$.

Here, we draw $42$ lines of distances $\frac 1{43}$ between
consecutive lines, including the top and bottom. Since
$h_{20}=\frac 17+\eps$, every item of type $\ell_{20}$ contains
parts of at least six lines (as an item with at most five lines in
its interior has height at most $\frac{7}{43}$), and we associate
it with exactly six lines. Any item of the other type contains a
part of at least one line, and we associate it with one such line.


The calculation of $V_{1(k-1)}$ is as follows. A line can have at
most three $\ell_{1(k-1)}$ items associated with it due to the
width of this type that is larger than $\frac 14$. The maximum
number of $\ell_{20}$ items for a line with $3$, $2$, $1$, and $0$
such items can have at most the following numbers of $\ell_{20}$
items associated with it (respectively): $0$, $1$, $2$, and $4$.
Let $x_i$ be the number of lines for which the number of
$\ell_{1(k-1)}$ items associated with them is $i$, where
$x_0+x_1+x_2+x_3=42$.

As there are six lines associated with every $\ell_{20}$ item, we
have at most $\left\lfloor \frac 16 (x_2+2x_1+4x_0) \right\rfloor$
items of type $\ell_{20}$. The weight of an $\ell_{1(k-1)}$ item
is $1$, and the weight of a type $\ell_{20}$ item is $4$ (so the
share of every line associated with it is $\frac 23$). Thus the
total weight is at most
$$4 \left\lfloor \frac{x_2+2x_1+4x_0}6 \right\rfloor + (3x_3+2x_2+x_1) \leq
3(x_0+x_1+x_2+x_3)=3\cdot 42=126  . $$

The calculation of $V_{1k}$ is as follows. A line can have at most
one $\ell_{1k}$ item associated with it, as its width is above
$\frac 12$. The maximum number of $\ell_{20}$ items associated
with a line with one $\ell_{1k}$ item is one, and it there are no
$\ell_{1k}$ items, there can be at most four $\ell_{20}$ items
associated with the line, as their widths are above $\frac 15$.
Let $x_0$ and $x_1=42-x_0$ be the numbers of lines with no
 $\ell_{1k}$ items associated with them, and with one associated $\ell_{1k}$ item, respectively . As
there are six lines associated with every $\ell_{20}$ item, we
have at most $\left\lfloor \frac 16 (4x_0+x_1) \right\rfloor$ such
items. The weight of an $\ell_{1k}$ item is $2$. Thus the total
weight is at most $$ 4 \left\lfloor \frac {4x_0+x_1}6
\right\rfloor + 2x_1 \leq \frac 83(x_0+x_1)=\frac 83\cdot 42=112
. $$
\end{proof}

We conclude with the following theorem. \begin{theorem} The
asymptotic competitive ratio of any online algorithm for rectangle
packing is at least $\frac{1274}{667}\approx 1.9100449$.
\end{theorem}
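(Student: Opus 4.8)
The plan is to apply the general lower bound formula of \cite{BDE} directly to the input constructed above, with $k$ treated as a fixed constant and the bound optimized by letting $k$ grow. That formula asserts that the displayed quantity $\frac{\sum_{i=1}^k v_{1i}+\sum_{j=2}^4\sum_{i=0}^2 v_{ji}}{Q}$ is a valid lower bound on the asymptotic competitive ratio, provided its standing hypotheses hold. So the work splits cleanly into two parts: verifying the hypotheses, and then performing the substitution and limit that the preceding pages have already prepared. First I would confirm the requirement that every optimal solution has cost $\Theta(N)$, which is what licenses reading off an \emph{asymptotic} rather than merely absolute bound. The lower bound $\Omega(N)$ follows from the packing lemma bounding $f\leq b_w\cdot b_h$: the first $N$ items have width exceeding $\frac{1}{5^{k-2}}$ and height exceeding $\frac1{43}$, so at most $42(5^{k-2}-1)$ of them fit in any bin, forcing cost at least $\frac{N}{42(5^{k-2}-1)}$. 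The matching upper bound $O(N)$ is immediate from the total item count $(k+9)N$.

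Next I would assemble the two pieces of the ratio. The numerator is computed by summing the geometric tail $\sum_{i=1}^{k-2} v_{1i}=\sum_{j=0}^{k-3} 5^{-j}=1.25-\frac{1}{4\cdot 5^{k-1}}$ and adding the constant $67$ contributed by the remaining eleven types, giving $68.25-\frac{1}{4\cdot 5^{k-1}}$. For the denominator $Q$ I would substitute the $\Omega_{ji}$ values established in the first lemma and the $V_{ji}$ bounds supplied by the corollary together with the four subsequent lemmas, partitioning bins by the arrival time of their first item and weighting each freshly opened group by its $V_{ji}$ bound; summing the resulting geometric series collapses the expression to $Q\leq \frac{6003-7/5^{2k-6}}{168}$.

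One point must be checked for the substitution to be legitimate: the formula permits replacing each $V_{ji}$ by an \emph{upper} bound only when its multiplier $\Omega_{ji}-\Omega_{\mathrm{prev}}$ is nonnegative. I would therefore confirm that the column $\Omega_{ji}$ (equivalently $168\cdot OPT_{ji}/N$) is monotonically nondecreasing as one reads down the table, namely $1,2,4,10,16,28,42,56,84,105,126,168$, together with the increasing prefix $\frac{1}{5^{k-i-2}}$ across the first $k-2$ types. This monotonicity makes every multiplier positive, so the upper bounds on $V_{ji}$ overestimate $Q$ and hence yield a valid lower bound on $r$. Finally I would form $r\geq \frac{68.25-\frac{1}{4\cdot 5^{k-1}}}{\frac{6003}{168}-\frac{1}{24\cdot 5^{2k-6}}}$ and let $k\to\infty$; both error terms vanish, leaving $\frac{68.25\cdot 168}{6003}=\frac{11466}{6003}=\frac{1274}{667}\approx 1.9100449$.

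The main obstacle is not in this final assembly, which is pure bookkeeping once the tables are justified, but rather in the correctness of the $V_{ji}$ entries on which everything rests. These come from the line-counting arguments of the earlier lemmas, and the delicate cases are $V_{22}=68$ and $V_{1(k-1)}=126$, where the floor functions and the case split on the count $x_0$ of lines of each kind force the claimed value exactly, with no slack to spare. Since those lemmas are already proved, the remaining risk in the present step is only that of an indexing mismatch between the formula's grouping of the $\Omega$-differences and the order of the types in the table, which I would verify term by term before taking the limit.
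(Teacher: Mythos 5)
Your proposal is correct and follows essentially the same route as the paper: it invokes the general formula of \cite{BDE} with the numerator $68.25-\frac{1}{4\cdot 5^{k-1}}$, the bound $Q\leq\frac{6003-7/5^{2k-6}}{168}$ assembled from the tabulated $\Omega_{ji}$ and $V_{ji}$ values, the $\Theta(N)$ growth of optimal solutions, and the monotonicity of the $\Omega_{ji}$ sequence justifying the use of upper bounds on $V_{ji}$, before letting $k\to\infty$ to obtain $\frac{1274}{667}$. All the checks you flag (positive multipliers, the $\Omega(N)$ lower bound via the $f\leq b_w\cdot b_h$ lemma, and the arithmetic $\frac{68.25\cdot 168}{6003}=\frac{1274}{667}$) are precisely the ones the paper performs.
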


\bibliographystyle{abbrv}
\bibliography{rectbib}

\end{document}